\documentclass[a4paper,11pt]{article}

\usepackage[utf8]{inputenc}
\usepackage[T1]{fontenc}
\usepackage{amsmath}
\usepackage{amsthm}
\usepackage{microtype}
\usepackage{fullpage}
\usepackage{amssymb}
\usepackage[hidelinks]{hyperref}
\usepackage[nameinlink]{cleveref}
\usepackage{centernot}
\usepackage{dsfont}

\theoremstyle{plain}
\newtheorem{theorem}{Theorem}
\theoremstyle{definition}

\theoremstyle{remark}

\theoremstyle{plain}
\newtheorem{lemma}[theorem]{Lemma}

\newtheorem{claim}[theorem]{Claim}

\newcommand{\prn}[1]{\left(#1\right)}
\newcommand{\cprn}[1]{\!\left(#1\right)}
\newcommand{\sqbra}[1]{\left[#1\right]}

\newcommand{\abs}[1]{\left|#1\right|}

\newcommand{\brkts}[1]{\left\{#1\right\}}

\newcommand{\bool}{\brkts{0,1}}

\newcommand{\TV}{\mathrm{TV}}
\newcommand{\Bern}{\mathrm{Bern}}
\newcommand{\PMFEquals}{\textsc{PMFEquals}}

\newcommand{\SubsetProd}{\textsc{SubsetProd}}

\newcommand{\SZK}{\mathsf{SZK}}
\newcommand{\NISZK}{\mathsf{NISZK}}

\newcommand{\tv}{d_{\mathrm{TV}}}

\newcommand{\dtv}{\tv}
\newcommand{\ignore}[1]{}

\renewcommand{\P}{\mathsf{P}}

\allowdisplaybreaks

\title{\bf Total Variation Distance for Product Distributions is $\#\P$-Complete%
\thanks{An extended version of this paper appeared in the proceedings of IJCAI 2023~\cite{bhattacharyya2023approximating}.}}

\author{%
Arnab Bhattacharyya \\
National University of Singapore
\and
Sutanu Gayen \\
IIT Kanpur
\and
Kuldeep S. Meel \\
University of Toronto
\and
Dimitrios Myrisiotis \\
CNRS@CREATE LTD.
\and
A. Pavan \\
Iowa State University
\and
N.~V.~Vinodchandran \\
University of Nebraska-Lincoln
}
\date{}
\begin{document}

\maketitle

\begin{abstract}
We show that computing the total variation distance between two product distributions is $\#\P$-complete.
This is in stark contrast with other distance measures such as Kullback–Leibler, Chi-square, and Hellinger, which tensorize over the marginals leading to efficient algorithms.
\end{abstract}

\section{Introduction}

\label{sec:introduction}

Given two distributions $P$ and $Q$ over a finite domain $\mathcal{D}$, their {\em total variation (TV) distance} or {\em statistical difference} $\dtv(P,Q)$ is defined as
\[
\dtv(P,Q)
= \max_{S \subseteq \mathcal{D}} \cprn{P(S)-Q(S)}
= \frac12 \sum_{x \in \mathcal{D}} |P(x) - Q(x)|
= \sum_{x \in \mathcal{D}} \max \cprn{0, P(x) - Q(x)}.
\]
Given two distributions $P$ and $Q$ over a finite domain $\mathcal{D}$, how hard is it to compute $\dtv(P,Q)$?
If $P$ and $Q$ are explicitly specified by the probabilities of all of the points of the (discrete) domain $\mathcal{D}$, summing up the absolute values of the differences in probabilities at all points leads to a simple linear time algorithm.
However, in many applications, the distributions of interest are of a high dimension with succinct representations.
In these scenarios, since the size of the domain $\mathcal{D}$ is exponentially large compared to its representation, an $O(|\mathcal{D}|)$ algorithm is highly impractical.
This leads to the following fundamental computational question:
\begin{quote}
\it
How hard is it to compute TV distance for high dimensional distributions?
\end{quote}
The simplest model for a high-dimensional distribution is the {\em product distribution}, which is a product of independent Bernoulli distributions.
Let a Bernoulli distribution with parameter $p$ be denoted by $\Bern\cprn{p}$.
A product distribution $P$ over $\{0,1\}^n$ can be described by $n$ parameters $p_1,\dots,p_n$ where each $p_i \in [0,1]$ is the probability that the $i$-th coordinate equals $1$ (such a $P$ is usually denoted by $\bigotimes_{i=1}^n\Bern(p_i)$).
For any $x\in \{0,1\}^n$, the probability of $x$ with respect to the distribution $P$ is given by $P\cprn{x}=\prod_{i\in S}p_i\prod_{i\in\sqbra{n}\setminus S}\prn{1-p_i}\in\sqbra{0,1}$, where $S \subseteq \sqbra{n}$ is such that $i\in S$ if and only if the $i$-th coordinate of $x$ is $1$, independently.

We show that the exact computation of the total variation distance between two product distributions $P$ and $Q$ is $\#\P$-complete.
This is a surprising result, given that for many other distance measures such as Hellinger, Chi-square, and Kullback–Leibler divergence, there are efficient algorithms for computing the distance between two product distributions.

\paragraph{Notation:}

We use $\sqbra{n}$ to denote the \emph{ordered} set $\brkts{1,\dots,n}$ and $\log$ to denote $\log_2$.
Moreover, we shall assume that all probabilities are represented as rational numbers of the form $a/b$.

\section{The Hardness of Computing TV Distance}

\label{sec:hardness}

\begin{theorem}
\label{thm:hardness-Bern-products-intro}
Given two product distributions $P$ and $Q$, computing $\dtv(P,Q)$ is $\#\P$-complete.
\end{theorem}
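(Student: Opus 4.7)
The proof splits into two parts: first, showing $\dtv(P,Q)$ can be computed with a $\#\P$ oracle, and second, showing some known $\#\P$-complete problem reduces to it.

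For the membership direction, my plan is to clear denominators. Since the input specifies every $p_i$ and $q_i$ as a rational $a/b$ with polynomially many bits, there is a common denominator $D$ of polynomial bit-length such that $D \cdot P(x)$ and $D \cdot Q(x)$ are nonnegative integers computable in time polynomial in the input length and $|x|$. Writing
\[
\dtv(P,Q) \;=\; \sum_{x \in \{0,1\}^n} \max\cprn{0,\,P(x)-Q(x)},
\]
the scaled quantity $D \cdot \dtv(P,Q) = \sum_x \max(0,\, DP(x)-DQ(x))$ is a sum of polynomially-computable nonnegative integers over exponentially many $x$, which is the defining shape of a $\#\P$ function. One outputs the numerator alongside the (trivially computed) denominator, placing the problem in $\#\P$.

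For hardness, I would reduce from $\#$-$\SubsetSum$ (or equivalently a weighted $\#\KNAPSACK$ counting variant), both known to be $\#\P$-complete. The structural observation I would exploit is that $\log\prn{P(x)/Q(x)} = \sum_i \prn{x_i \log(p_i/q_i) + (1-x_i)\log((1-p_i)/(1-q_i))}$ is an affine function of $x$, so the set $\brkts{x : P(x) > Q(x)}$ is the solution set of a linear threshold inequality $\sum_i w_i x_i > \tau$. Given a $\#\SubsetSum$ instance $(a_1,\dots,a_n,t)$, I would choose rationals $p_i, q_i$ of polynomial bit-length so that the induced weights $w_i$ are proportional to $a_i$ and the threshold $\tau$ lands at $t$, so the "positive part" of the integrand is supported exactly on the subset-sum solutions.

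The main obstacle — and where I expect the argument to need real work — is that $\dtv(P,Q)$ is not a pure count but a weighted count: each $x$ in the positive region contributes $P(x) - Q(x)$, not $1$. My plan to get around this is twofold. First, I would make multiple oracle calls to $\dtv$ by shifting $t$ by integer amounts (which corresponds to perturbing a single "threshold" coordinate or adding auxiliary Bernoulli factors), getting a system of equations whose successive differences isolate $\{x : \sum_i a_i x_i = t\}$. Second, I would choose the parameters so that the weight $P(x) - Q(x)$ is either constant on level sets of $\sum_i a_i x_i$, or takes the form of a geometric progression $\lambda \cdot \rho^{\,t - \sum_i a_i x_i}$, from which a polynomial-time Turing reduction using inversion/Vandermonde-style linear algebra recovers the individual level-set cardinalities. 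The delicate part is ensuring that every parameter remains a rational of polynomial bit-length, so that exact $\#\P$-hardness (and not merely hardness of approximation) is obtained.
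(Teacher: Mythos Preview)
Your membership argument is correct and essentially identical to the paper's: both clear denominators to turn $\dtv(P,Q)$ into an integer-valued sum over $\{0,1\}^n$ of polynomial-time computable nonnegative integers.

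For hardness, your route differs from the paper's and contains a genuine gap. The paper reduces from $\#\SubsetProd$ (via an intermediate problem $\#\PMFEquals$: count $x$ with $P(x)=v$), not from $\#\SubsetSum$. This is not a cosmetic choice. Your plan needs rationals $p_i,q_i$ with $\log\frac{p_i(1-q_i)}{q_i(1-p_i)}$ proportional to the integer weight $a_i$; the only way to make the logarithm of a rational equal a prescribed integer multiple is to take the odds ratio to be $r^{a_i}$ for some fixed rational $r>1$, and then the bit-length of $p_i$ is $\Theta(a_i)$ --- exponential when the $a_i$ are given in binary, which is precisely the regime in which $\#\SubsetSum$ is hard. (If you force the $a_i$ to be polynomially bounded so that the encoding stays small, the source problem collapses into $\FP$ via dynamic programming.) By contrast, $\#\SubsetProd$ matches the multiplicative structure of product distributions natively: setting $p_i:=a_i/(1+a_i)$ gives $p_i/(1-p_i)=a_i$ exactly, so $P(x)=\prod_i(1-p_i)\cdot\prod_{i\in S}a_i$, and every parameter has polynomial bit-length.

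Your second difficulty --- that $\dtv$ returns a weighted rather than a pure count --- is real, and your ``successive differences'' sketch is not quite right as stated: shifting the threshold requires changing a parameter, which simultaneously rescales all the surviving weights, so a single subtraction does not isolate one level set. The paper handles this cleanly with a tiny perturbation trick: it adds one extra coordinate with parameters $\tfrac12\pm\beta$, where $\beta$ is chosen smaller than the minimum nonzero value of $\frac{|P(x)-v|}{P(x)+v}$ (which is bounded away from zero because all quantities are fixed-precision rationals). The difference of two $\dtv$ values then equals exactly $2\beta v\cdot\abs{\{x:P(x)=v\}}$. Two oracle calls suffice; no Vandermonde system or multi-query inversion is needed.
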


We first introduce a problem called $\#\PMFEquals$ and show that it is $\#\P$-hard.
We have that $\#\PMFEquals$ is the following problem:
Given a probability vector $\prn{p_1,\dots,p_n}$ where $p_i\in\sqbra{0,1}$ and a number $v$, compute the number of $x\in\bool^n$ such that $P\cprn{x}=v$ (that is the number $|\{x \in \{0,1\}^n\mid P(x)=v\}|$), where $P$ is the product distribution described by $\prn{p_1,\dots,p_n}$.

\begin{lemma}
\label{lemma:PMFEquals-is-hard}
{\rm $\#\PMFEquals$} is $\#\P$-hard.
\end{lemma}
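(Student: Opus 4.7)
The plan is to identify a $\#\P$-hard problem whose combinatorial structure matches the product-distribution PMF and reduce from it. The crucial structural observation is that for a product $P = \bigotimes_i \Bern(p_i)$,
\[
P(x) \;=\; \prod_{i=1}^{n} (1-p_i) \cdot \prod_{i : x_i = 1} \frac{p_i}{1-p_i},
\]
so $P(x)$ equals a global constant $\prod_i(1-p_i)$ times a \emph{product} of likelihood ratios indexed by $\{i : x_i = 1\}$. This multiplicative structure steers the reduction away from additive counting problems such as $\#\SubsetSum$: naively setting $p_i/(1-p_i) = 2^{a_i}$ to convert a sum into a product would blow up the bit-length of each $p_i$ to $\Omega(a_i)$, which is exponential in the input size of $\#\SubsetSum$.

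Instead, I would reduce from $\#\SubsetProd$: given positive integers $a_1,\dots,a_n$ and a target $t$, count the subsets $S\subseteq[n]$ with $\prod_{i\in S} a_i = t$. This problem is $\#\P$-hard with \emph{polynomially bounded} integer inputs, via the textbook parsimonious reduction from the $\#\P$-complete problem $\#\textsc{Exact-Cover-by-3-Sets}$: associate each ground element with a distinct prime (of magnitude $O(k\log k)$ by the prime number theorem) and each $3$-set with the product of its three primes, so every resulting $a_i$ has only $O(\log n)$ bits.

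Given such an instance $(a_1,\dots,a_n,t)$ with $a_i\ge 1$, set $p_i = a_i/(a_i+1)\in[0,1]$, giving $p_i/(1-p_i) = a_i$ and $\prod_i(1-p_i) = \prod_i 1/(a_i+1)$. Choose $v = t \cdot \prod_i 1/(a_i+1)$. The identity above then yields
\[
P(x) = v \iff \prod_{i : x_i = 1} a_i = t,
\]
so the $\#\PMFEquals$ count on input $((p_1,\dots,p_n), v)$ is exactly the number of subsets of $[n]$ whose product is $t$. All numerators and denominators have polynomial bit-length, so this is a polynomial-time reduction.

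The main obstacle is ensuring that the multiplicative encoding does not blow up: the crux of the argument is the existence of a $\#\P$-hard multiplicative counting problem ($\#\SubsetProd$) whose instances already have polynomially bounded integers, because otherwise the $p_i$'s and $v$ would require exponentially many bits. Once that is in hand, the reduction to $\#\PMFEquals$ itself is a one-line verification.
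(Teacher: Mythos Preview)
Your reduction is exactly the one the paper gives: set $p_i = a_i/(1+a_i)$ and $v = t\prod_i(1-p_i)$, so that $P(x_S)=v$ iff $\prod_{i\in S}a_i=t$. The only difference is that the paper simply asserts $\#\SubsetProd$ is $\#\P$-hard, whereas you additionally sketch the parsimonious reduction from $\#\textsc{X3C}$ via distinct primes to certify that the resulting integers have polynomial bit-length; this extra care is welcome but not a departure from the paper's approach.
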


\begin{proof}
We will show a reduction from $\#\SubsetProd$ to $\#\PMFEquals$, and the result will follow from the fact that $\#\SubsetProd$ is $\#\P$-hard.
$\#\SubsetProd$ is the following problem:
Given integers $a_1,\dots,a_n,T$, compute the number of sets $S\subseteq\sqbra{n}$ such that $\prod_{i\in S}a_i=T$.

Let $a_1,\dots,a_n$, and $T$ be the numbers of an arbitrary $\#\SubsetProd$ instance $I_S$.
We will create a $\#\PMFEquals$ instance $I_P$ that has the same number of solutions as $I_S$.
To this end, let $p_i:=\frac{{a_i}}{1+{a_i}}$ for every $i$ and $v:=T\prod_{i\in\sqbra{n}}(1-p_i)$, and observe that $a_i=\frac{p_i}{1-p_i}$.
For any set $S\subseteq [n]$, we have the following equivalences:
\[
\prod_{i\in S}a_i=T
\Leftrightarrow\prod_{i\in S}\frac{p_i}{1-p_i}=\frac{v}{\prod_{i\in\sqbra{n}}(1-p_i)}
\Leftrightarrow\prod_{i\in S}p_i\prod_{i\notin S}\prn{1-p_i}=v
\Leftrightarrow P\cprn{x_S}=v,
\]
where $x_S$ is such that $x_i=1$ if and only if $i\in S$.
This completes the proof.
\end{proof}

We now turn to \Cref{thm:hardness-Bern-products-intro}.

\begin{proof}[Proof of \Cref{thm:hardness-Bern-products-intro}]
We separately prove membership in $\#\P$ and $\#\P$-hardness.

\paragraph{Membership in $\#\P$:}

Let $P$ and $Q$ be two product distributions specified by $p_1,\dots,p_n$ and $q_1,\dots,q_n$, respectively.
Without loss of generality we shall assume that these parameters are fractions (as we only have some finite precision available).
The goal is to design a nondeterministic machine ${\cal N}$ that takes $p_1,\dots,p_n$ and $q_1,\dots,q_n$ as inputs and is such that the number of its accepting paths (normalized by an appropriate quantity) equals $\dtv(P,Q)$.

Let $M$ be the product of the denominators of all parameters $p_1,\dots,p_n,q_1,\dots,q_n$ and their complements $1-p_1,\dots,1-p_n,1-q_1,\dots,1-q_n$.
The non-deterministic machine ${\cal N}$ first guesses an element $i\in\bool^n$ in the sample space of $P$ and $Q$, computes $|P(i)-Q(i)|$ by using the parameters $p_1,\dots,p_n,q_1,\dots,q_n$, then guesses an integer $0\leq z\leq M$, and finally accepts if and only if $1\leq z\leq M|P(i)-Q(i)|$.
(Note that $M|P(i)-Q(i)|=\abs{M\cdot P(i)-M\cdot Q(i)}$ is an integer.)

It follows that $\dtv(P,Q) = \frac{1}{2}\sum_{i\in\bool^n}\abs{P\cprn{i}-Q\cprn{i}}$ is equal to the number of accepting paths of ${\cal N}$ divided by ${2M}$.

\paragraph{$\#\P$-hardness:}
 
For establishing hardness, we will show a reduction from $\#\PMFEquals$ to computing TV distance between product distributions.
The theorem will then follow from \Cref{lemma:PMFEquals-is-hard}.

Let $p_1,\dots,p_n$ and $v$ be the numbers in an arbitrary instance of $\#\PMFEquals$ where each $p_i$ is represented as an $m$-bit binary fraction.
With this, $P(x)$ can be represented as an $nm$-bit binary fraction.
Thus without loss of generality, we can assume that $v$ is also an $nm$-bit fraction.
We distinguish between two cases depending on whether $v<2^{-n}$ or $v\geq2^{-n}$.

\subparagraph{Case A: $v<2^{-n}$.}

First, we construct two distributions $\hat P=\Bern\cprn{\hat p_1}\otimes\cdots\otimes \Bern\cprn{\hat p_{n+1}}$ and $\hat Q=\Bern\cprn{\hat q_1}\otimes\cdots\otimes\Bern\cprn{\hat q_{n+1}}$ over $\{0,1\}^{n+1}$ as follows:
$\hat p_i:=p_i$ for $i\in\sqbra{n}$ and $\hat p_{n+1}:=1$; $\hat q_i:=1/2$ for $i\in\sqbra{n}$ and $\hat q_{n+1}:=v2^n$.
We have that $\dtv\cprn{\hat P,\hat Q}$ is, by definition, equal to the sum $\sum_{x\in\{0,1\}^{n+1}}\max\cprn{0,\hat P\cprn{x}-\hat Q\cprn{x}}$ or
\begin{align}
\sum_{x\in\bool^n}\max\cprn{0,P\cprn{x}-\frac{1}{2^n}v2^n}
=\sum_{x\in\bool^n}\max\cprn{0,P\cprn{x}-v}
=\sum_{x:P\prn{x}>v}\prn{P\cprn{x}-v}.
\label{eq:hardness}
\end{align}
We define two more distributions $P'$ and $Q'$ over $\{0,1\}^{n+2}$, by making use of the following claim.

\begin{claim}
\label{clm:beta}
For $\beta = \frac{1}{2^{3nm}}$, the following hold for all $x$:
If $P\cprn{x}<v$, then $P\cprn{x}\prn{\frac{1}{2}+\beta}<v\prn{\frac{1}{2}-\beta}$;
if $P\cprn{x}>v$, then $P\cprn{x}\prn{\frac{1}{2}-\beta}>v\prn{\frac{1}{2}+\beta}$.
\end{claim}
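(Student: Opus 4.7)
The plan is to reduce both implications to a single quantitative separation statement: if $P\cprn{x} \neq v$, then $\abs{P\cprn{x} - v} > 2\beta\cprn{P\cprn{x} + v}$. Algebraically, $P\cprn{x}\prn{\frac{1}{2} + \beta} < v\prn{\frac{1}{2} - \beta}$ rearranges to $v - P\cprn{x} > 2\beta\cprn{P\cprn{x} + v}$, and $P\cprn{x}\prn{\frac{1}{2} - \beta} > v\prn{\frac{1}{2} + \beta}$ rearranges to $P\cprn{x} - v > 2\beta\cprn{P\cprn{x} + v}$. Together with the sign assumption on $P\cprn{x} - v$ in each branch, the separation statement immediately yields the claim.

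Next I would establish the lower bound $\abs{P\cprn{x} - v} \geq 2^{-nm}$ whenever $P\cprn{x} \neq v$. The key observation is that, because each $p_i$ is an $m$-bit binary fraction, both $p_i$ and $1 - p_i$ are integer multiples of $2^{-m}$. Therefore $P\cprn{x} = \prod_{i \in S} p_i \prod_{i \notin S}\prn{1 - p_i}$ is an integer multiple of $2^{-nm}$. Since $v$ is also assumed to be an $nm$-bit binary fraction, the difference $P\cprn{x} - v$ lies in $2^{-nm}\mathbb{Z}$, so any nonzero such difference has magnitude at least $2^{-nm}$.

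The last step is to combine these bounds with the trivial upper estimate $P\cprn{x} + v \leq 2$, which holds since $P\cprn{x}, v \in [0,1]$. This gives $2\beta\cprn{P\cprn{x} + v} \leq 4\beta = 2^{2 - 3nm}$, so the desired strict inequality $\abs{P\cprn{x} - v} > 2\beta\cprn{P\cprn{x} + v}$ reduces to the numerical check $2^{-nm} > 2^{2 - 3nm}$, i.e., $nm > 1$, which holds for any nontrivial instance.

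The only step requiring genuine care is the denominator bookkeeping in the second paragraph, namely verifying that a nonzero $P\cprn{x} - v$ is bounded below by $2^{-nm}$; everything else is an algebraic rearrangement or a substitution of the chosen value of $\beta$. The exponent $3nm$ in $\beta = 2^{-3nm}$ has been chosen with exactly the margin needed: a factor of $2^{-nm}$ is spent on the separation of $P\cprn{x}$ from $v$, a factor of $2^{-nm}$ is absorbed by $P\cprn{x} + v \leq 2$, and the remaining factor of $2^{-nm}$ leaves the strict inequality with room to spare.
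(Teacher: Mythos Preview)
Your proof is correct and follows essentially the same strategy as the paper: both reduce the two implications to the single inequality $\beta$ (or $2\beta$) being dominated by $\frac{|P(x)-v|}{P(x)+v}$, and both exploit that $P(x)$ and $v$ are integer multiples of $2^{-nm}$ to force a nonzero gap of size at least $2^{-nm}$. Your version is in fact slightly cleaner than the paper's, since you bound numerator and denominator separately ($|P(x)-v|\geq 2^{-nm}$ and $P(x)+v\leq 2$) rather than asserting, as the paper does, that the ratio $\frac{|v-P(x)|}{v+P(x)}$ is itself a $2nm$-bit fraction --- a step that is not literally correct for arbitrary denominators but whose intended conclusion you recover directly.
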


\begin{proof}
For \Cref{clm:beta} to hold, observe that we want $\beta$ to be at most $\frac{|v-P(x)|}{v+P(x)}$ for every $x$, so that $P(x) \neq v$.
Since both $v$ and $P(x)$ have $nm$-bit representations, both $|v-P(x))|$ and $v+P(x)$ have $nm$-bit representations. 
Thus $\frac{|v-P(x)|}{v+P(x)}$ can be represented as a $2nm$-bit fraction.
Since this fraction is not zero, and the smallest $2nm$-bit fraction is $\frac{1}{2^{2nm}}$, choosing $\beta := \frac{1}{2^{3nm}}$ suffices.
\end{proof}

We now define $P'$ and $Q'$ as follows:
$p_i':=p_i$ for $i\in [n]$, $p_{n+1}':=1$, and $p_{n+2}':=\frac{1}{2}+\beta$; $q_i':=\frac{1}{2}$ for $i\in [n]$, $q_{n+1}':=v2^n$, and $q_{n+2}':=\frac{1}{2}-\beta$ where $\beta$ is as in \Cref{clm:beta}.
We establish the following.

\begin{claim}
\label{clm:v-small}
We have that $\abs{\brkts{x\mid P\cprn{x}=v}} = \frac{1}{2\beta v}\prn{d_{\TV}\cprn{P',Q'}-d_{\TV}\cprn{\hat P,\hat Q}}$.
\end{claim}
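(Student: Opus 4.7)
The plan is to compute $\dtv(P',Q')$ directly from the definition and compare it against the expression for $\dtv(\hat P, \hat Q)$ already established in \Cref{eq:hardness}. I would parameterize $z \in \{0,1\}^{n+2}$ as $z = (x, y, b)$ with $x \in \{0,1\}^n$ the first $n$ coordinates, $y$ the $(n+1)$-th, and $b$ the $(n+2)$-th. Since $p'_{n+1} = 1$, every term with $y = 0$ has $P'(z) = 0$ and therefore contributes nothing to $\dtv(P',Q') = \sum_z \max(0, P'(z) - Q'(z))$. For $y = 1$, using $\hat Q(x,1) = v$ (which follows from $\hat q_i = 1/2$ for $i \in [n]$ together with $\hat q_{n+1} = v 2^n$), I would then write out the four relevant values: $P'(x,1,1) = P(x)(1/2+\beta)$ paired with $Q'(x,1,1) = v(1/2-\beta)$, and $P'(x,1,0) = P(x)(1/2-\beta)$ paired with $Q'(x,1,0) = v(1/2+\beta)$.

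For each fixed $x \in \{0,1\}^n$ I would then compute the partial sum $\sum_{b\in\{0,1\}} \max(0, P'(x,1,b) - Q'(x,1,b))$ via a three-way case split on the sign of $P(x) - v$, applying \Cref{clm:beta} slot by slot. If $P(x) > v$, both differences are positive and, when added, the $\beta$-parts cancel, leaving a contribution of $P(x) - v$. If $P(x) < v$, both differences are negative, so the contribution is $0$. If $P(x) = v$, only the $b = 1$ difference is positive, equal to $2\beta v$. Summing over all $x \in \{0,1\}^n$ and invoking \Cref{eq:hardness} yields
\[
\dtv(P', Q') = \dtv(\hat P, \hat Q) + 2\beta v \cdot |\{x : P(x) = v\}|,
\]
which rearranges to the stated identity.

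The only real work is the case analysis, and the main obstacle is keeping the sign bookkeeping straight across the two values of $b$. \Cref{clm:beta} is engineered precisely so that, whenever $P(x) \ne v$, the two quantities $P(x)(1/2 \pm \beta) - v(1/2 \mp \beta)$ share a common sign; this is exactly what isolates the stratum $P(x) = v$ as a clean contribution proportional to $2\beta v$, while the strata $P(x) > v$ and $P(x) < v$ together reproduce $\dtv(\hat P, \hat Q)$ exactly.
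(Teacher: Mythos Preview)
Your proposal is correct and follows essentially the same approach as the paper: expand $\dtv(P',Q')$ using the formula $\sum_z \max(0,P'(z)-Q'(z))$, eliminate the $y=0$ terms via $p'_{n+1}=1$, then use \Cref{clm:beta} to determine which of the remaining terms are positive, and finally compare against \Cref{eq:hardness}. Your three-way case split on the sign of $P(x)-v$ is a slightly cleaner organization of the same computation the paper carries out by first splitting into the sums over $\{x:P(x)\ge v\}$ and $\{x:P(x)>v\}$ and then peeling off the $P(x)=v$ stratum.
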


\begin{proof}
We have that $d_{\TV}\cprn{P',Q'}$ is equal to $\sum_{x\in \{0, 1\}^{n+2}}\max\cprn{0,P'\cprn{x}-Q'\cprn{x}}$ or
\begin{align*}
&\sum_{x\in\{0,1\}^n}\max\cprn{0,P\cprn{x}\prn{\frac{1}{2}+\beta}-\frac{v2^n}{2^n}\prn{\frac{1}{2}-\beta}}+\max\cprn{0,P\cprn{x}\prn{\frac{1}{2}-\beta}-\frac{v2^n}{2^n}\prn{\frac{1}{2}+\beta}}\\
&=\sum_{x}\max\cprn{0,P\cprn{x}\prn{\frac{1}{2}+\beta}-v\prn{\frac{1}{2}-\beta}}
+\max\cprn{0,P\cprn{x}\prn{\frac{1}{2}-\beta}-v\prn{\frac{1}{2}+\beta}}\\
&=\sum_{x:P\prn{x}\geq v} P\cprn{x}\prn{\frac{1}{2}+\beta}-v\prn{\frac{1}{2}-\beta} +\sum_{x:P\prn{x}>v}P\cprn{x}\prn{\frac{1}{2}-\beta}-v\prn{\frac{1}{2}+\beta}\\
&=\sum_{x:P\prn{x}= v}P\cprn{x}\prn{\frac{1}{2}+\beta}-v\prn{\frac{1}{2}-\beta}+\sum_{x:P\prn{x}> v} P\cprn{x}\prn{\frac{1}{2}+\beta}-v\prn{\frac{1}{2}-\beta}\\
&\qquad+\sum_{x:P\prn{x}>v}P\cprn{x}\prn{\frac{1}{2}-\beta}-v\prn{\frac{1}{2}+\beta}
=2\beta v\abs{\brkts{x\mid P\cprn{x}=v}}
+\sum_{x:P\prn{x}>v}\prn{P\cprn{x}-v}.
\end{align*}
The result now follows from \Cref{eq:hardness}.
The first equality above follows from the definitions of $P'$ and $Q'$ (since $p_{n+1}'= 1$).
Note that for every $x$ with $P(x) < v$, by Claim~\ref{clm:beta}, $P(x)(\frac{1}{2}+\beta) < v(\frac{1}{2}-\beta)$ and if $P(x) \geq v$, then $P(x)(\frac{1}{2}+\beta) \geq  v(\frac{1}{2}-\beta)$.
Also when $P(x) \leq  v$, we have that $P(x)(\frac{1}{2}+\beta)$ is at most $v(\frac{1}{2}-\beta)$ and when $P(x) > v$, by Claim~\ref{clm:beta}, we have $P(x)(\frac{1}{2}-\beta) > v(\frac{1}{2}+\beta)$.
These imply the second equality.
The rest of the equalities hold by algebraic manipulations.
\end{proof}

Thus, $|\{x\mid P(x) = v\}|$ can be computed by 
computing $\dtv(P',Q')$ and $\dtv(\hat{P},\hat{Q})$. 
Thus the proof in this case follows by \Cref{lemma:PMFEquals-is-hard}.

\subparagraph{Case B: $v \geq 2^{-n}$.}

First, let us define distributions $\hat P=\Bern\cprn{\hat p_1}\otimes\cdots\otimes\Bern\cprn{\hat p_n}$ and $\hat Q=\Bern\cprn{\hat q_1}\otimes\cdots\otimes\Bern\cprn{\hat q_n}$ as follows: $\hat p_i:=p_i$ for $i\in\sqbra{n}$ and $\hat p_{n+1}:=\frac{1}{v2^n}$; $\hat q_i:=\frac{1}{2}$ for $i\in\sqbra{n}$ and $\hat q_{n+1}:=1$.

We now have that $d_{\TV}\cprn{\hat P,\hat Q}$ is equal to $\frac{1}{2}\sum_{x}\abs{\hat P\cprn{x}-\hat Q\cprn{x}}$ or
\begin{align*}
\sum_{x}\max\cprn{0,\hat P\cprn{x}-\hat Q\cprn{x}}
&=\sum_{x}\max\cprn{0,P\cprn{x}\frac{1}{v2^n}-\frac{1}{2^n}}
+\sum_{x}\max\cprn{0,P\cprn{x}\prn{1-\frac{1}{v2^n}}}\\
&=\sum_{x}\max\cprn{0,P\cprn{x}\frac{1}{v2^n}-\frac{1}{2^n}}
+1-\frac{1}{v2^n}.
\end{align*}
As earlier, we define two more distributions $P'$ and $Q'$, by making use of \Cref{clm:beta}.
The new distributions $P'$ and $Q'$ are such that $p_i':=p_i$ for $i\in\sqbra{n}$, $p_{n+1}':=\frac{1}{v2^n}$, and $p_{n+2}':=\frac{1}{2}+\beta$; $q_i':=1/2$ for $i\in\sqbra{n}$, $q_{n+1}':=1$, and $q_{n+2}':=\frac{1}{2}-\beta$.
We establish the following claim.

\begin{claim}
\label{clm:v-big}
We have that $\abs{\brkts{x\mid P\cprn{x}=v}} = \frac{2^{n-1}}{\beta}\prn{d_{\TV}\cprn{P',Q'}-d_{\TV}\cprn{\hat P,\hat Q}}$.
\end{claim}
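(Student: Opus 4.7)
The plan is to mirror the proof of \Cref{clm:v-small}: compute $d_{\TV}(P', Q') = \sum_{x \in \{0,1\}^{n+2}} \max(0, P'(x) - Q'(x))$ by splitting the sum on the last two coordinates $(x_{n+1}, x_{n+2})$, and then cancel the pieces that reproduce the already-computed formula for $d_{\TV}(\hat P, \hat Q)$ stated in the excerpt.

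First, for $x_{n+1} = 0$, since $q'_{n+1} = 1$ we have $Q'(x) = 0$, so $\max(0, P'(x) - Q'(x)) = P'(x)$. Summing the last two Bernoulli factors collapses this half of the sum to the constant $1 - \frac{1}{v 2^n}$, which is exactly the constant term already isolated in the excerpt's expression for $d_{\TV}(\hat P, \hat Q)$ and will therefore cancel in the subtraction.

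Next, for $x_{n+1} = 1$, factoring $\frac{1}{v 2^n}$ out, each $x \in \{0,1\}^n$ contributes (with $x_{n+2} \in \{1, 0\}$) $\max(0, P(x)(\tfrac{1}{2} + \beta) - v(\tfrac{1}{2} - \beta)) + \max(0, P(x)(\tfrac{1}{2} - \beta) - v(\tfrac{1}{2} + \beta))$. I would then invoke \Cref{clm:beta} in three regimes: if $P(x) < v$ both maxima vanish; if $P(x) > v$ both are strictly positive and the $\pm \beta(P(x) + v)$ contributions cancel, leaving exactly $P(x) - v$; at the boundary $P(x) = v$ only the first max is nonzero and equals $2 \beta v$. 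Summing over $x$, the $x_{n+1} = 1$ contribution becomes $\frac{1}{v 2^n}\bigl[ 2 \beta v \cdot |\{x : P(x) = v\}| + \sum_{x : P(x) > v}(P(x) - v)\bigr]$.

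To conclude, observe that $\sum_{x : P(x) > v}(P(x) - v) = v 2^n \sum_x \max(0, \tfrac{P(x)}{v 2^n} - \tfrac{1}{2^n})$, which is precisely the non-constant summand of $d_{\TV}(\hat P, \hat Q)$ given in the excerpt. After subtraction everything cancels except $d_{\TV}(P', Q') - d_{\TV}(\hat P, \hat Q) = \frac{2 \beta}{2^n} \cdot |\{x : P(x) = v\}|$, and solving for the count yields the claim. The only delicate step is the boundary case $P(x) = v$: without the margin guaranteed by \Cref{clm:beta}, the signs of $P(x)(\tfrac{1}{2} \pm \beta) - v(\tfrac{1}{2} \mp \beta)$ could not be read off from a single comparison of $P(x)$ with $v$, and the clean extraction of the $2\beta v$ contribution per tie would fail; everything else is routine bookkeeping.
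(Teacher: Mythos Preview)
Your proposal is correct and follows essentially the same approach as the paper: split the $d_{\TV}(P',Q')$ sum over the last two coordinates, collapse the $x_{n+1}=0$ half to the constant $1-\tfrac{1}{v2^n}$, use \Cref{clm:beta} to case-split the $x_{n+1}=1$ half according to $P(x)\lessgtr v$, and identify the surviving pieces with the already-computed $d_{\TV}(\hat P,\hat Q)$. The only cosmetic difference is that you factor out $\tfrac{1}{v2^n}$ so the inner comparison reads as $P(x)$ versus $v$ (matching the statement of \Cref{clm:beta} verbatim), whereas the paper keeps the scaling and compares $\tfrac{P(x)}{v2^n}$ against $\tfrac{1}{2^n}$; the two are equivalent since the factor is positive.
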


\begin{proof}
We have that $d_{\TV}\cprn{P',Q'}$ is equal to $\sum_{x}\max\cprn{0,P'\cprn{x}-Q'\cprn{x}}$ or
\begin{align*}
&\sum_{x}\max\cprn{0,P\cprn{x}\frac{1}{v2^n}\prn{\frac{1}{2}+\beta}-\frac{1}{2^n}\prn{\frac{1}{2}-\beta}}
+\max\cprn{\!0,P\cprn{x}\frac{1}{v2^n}\prn{\frac{1}{2}-\beta}-\frac{1}{2^n}\prn{\frac{1}{2}+\beta}}\\
&\qquad+\sum_{x}\max\cprn{0,P\cprn{x}\prn{1-\frac{1}{v2^n}}\prn{\frac{1}{2}+\beta}}
+\max\cprn{0,P\cprn{x}\prn{1-\frac{1}{v2^n}}\prn{\frac{1}{2}-\beta}}\\
&=\sum_{x}\max\cprn{0,P\cprn{x}\frac{1}{v2^n}\prn{\frac{1}{2}+\beta}-\frac{1}{2^n}\prn{\frac{1}{2}-\beta}}\\
&\qquad+\sum_{x}\max\cprn{0,P\cprn{x}\frac{1}{v2^n}\prn{\frac{1}{2}-\beta}-\frac{1}{2^n}\prn{\frac{1}{2}+\beta}}
+ \max\cprn{0,P\cprn{x}\prn{1-\frac{1}{v2^n}}}.
\end{align*}
Applying \Cref{clm:beta}, we have that $d_{\TV}\cprn{P',Q'}$ is equal to
\begin{align*}
&\sum_{x:P\prn{x}\geq v}\max\cprn{0,P\cprn{x}\frac{1}{v2^n}\prn{\frac{1}{2}+\beta}-\frac{1}{2^n}\prn{\frac{1}{2}-\beta}}\\
&\qquad+\sum_{x:P\prn{x}>v}\max\cprn{0,P\cprn{x}\frac{1}{v2^n}\prn{\frac{1}{2}-\beta}-\frac{1}{2^n}\prn{\frac{1}{2}+\beta}}
+1-\frac{1}{v2^n}\\
&=\sum_{x:P\prn{x}=v}\max\cprn{0,P\cprn{x}\frac{1}{v2^n}\prn{\frac{1}{2}+\beta}-\frac{1}{2^n}\prn{\frac{1}{2}-\beta}}\\
&\qquad+\sum_{x:P\prn{x}>v}\max\cprn{0,P\cprn{x}\frac{1}{v2^n}\prn{\frac{1}{2}+\beta}-\frac{1}{2^n}\prn{\frac{1}{2}-\beta}}\\
&\qquad+\sum_{x:P\prn{x}>v}\max\cprn{0,P\cprn{x}\frac{1}{v2^n}\prn{\frac{1}{2}-\beta}-\frac{1}{2^n}\prn{\frac{1}{2}+\beta}}
+1-\frac{1}{v2^n}\\
&=2\beta\frac{1}{2^n}\abs{\brkts{x\mid P\cprn{x}=v}}
+\sum_{x}\max\cprn{0,P\cprn{x}\frac{1}{v2^n}-\frac{1}{2^n}}
+1-\frac{1}{v2^n}\\
&=\frac{\beta}{2^{n-1}}\abs{\brkts{x\mid P\cprn{x}=v}}+d_{\TV}\cprn{\hat P,\hat Q}.
\end{align*}
This concludes the proof.
\end{proof}

Thus, also in this case, the proof follows by \Cref{lemma:PMFEquals-is-hard}.
\end{proof}

\section{Bibliographical Remarks}

\label{sec:relatedwork}

\cite{SV03} established, in a seminal work, that additively approximating the TV distance between two distributions that are samplable by Boolean circuits is hard for $\SZK$.
Subsequent works captured variations of this theme~\cite{GoldreichSV99,Malka15,DPV20}:
For example,~\cite{GoldreichSV99} showed that the problem of deciding whether a distribution samplable by a Boolean circuit is close or far from the uniform distribution is complete for $\NISZK$.
Finally, \cite{CortesMR07,LyngsoP02,hmm18} show that it is undecidable to check whether the TV distance between two hidden Markov models is greater than a threshold or not, and that it is $\#\P$-hard to additively approximate it.

On an algorithmic note, \cite{0001GMV20} designed efficient algorithms to additively approximate the TV distance between distributions efficiently samplable and efficiently computable.
In a similar vein, \cite{PM21} studied a related \emph{property testing} variant of TV distance for distributions encoded by circuits.
Recently, \cite{feng2023simple} designed an FPRAS for relatively approximating the TV distance between two arbitrary product distributions and \cite{feng2024deterministically} gave an FPTAS for it.

\paragraph{Acknowledgements:}

This work was supported in part by the National Research Foundation Singapore under its NRF Fellowship Programme [NRF-NRFFAI-2019-0002, NRF-NRFFAI1-2019-0004], Amazon Faculty Research Awards, an initiation grant by IIT Kanpur (SG), SERB award CRG/2022/007985 (SG), and NSF awards 2130536 and 2130608.
This work was done in part when AB, DM, AP, and NVV visited the Simons Institute for the Theory of Computing.


{\small
\newcommand{\etalchar}[1]{$^{#1}$}

}

\end{document}